\newcommand{\myway}[1]{\raisebox{-4pt}{\rule{0pt}{16pt}}\colorbox[rgb]{.7,.7,.7}{#1}}
\newtheorem{theorem}{Theorem}
\newtheorem{corollary}[theorem]{Corollary}
\renewenvironment{proof}{\vspace{-3.5ex} \trivlist \item[\hskip \labelsep{\bf Proof.}]}%
{\qed}
\newenvironment{definition}[1][]
   {\refstepcounter{theorem} \par\medskip\noindent
   {\bf Definition~\thetheorem~(#1)} \ignorespaces }
   {\par\medskip }
\newcounter{algcnt}
\newenvironment{algorithm}[1]
{\refstepcounter{algcnt} 
\newcommand{\Precondition}[1]{~\newline 
\item[\hskip\labelsep{\textbf{Precondition:}}] ##1}%
\newcommand{\Postcondition}[1]{
\item[\hskip\labelsep{\textbf{Postcondition:}}] ##1}
\newcommand{\Rules}{~\newline\noindent\textbf{Rules}\nopagebreak}
\begin{trivlist}
\setlength{\itemsep}{-\parsep}
\item[\hskip \labelsep{\framebox{\textbf{Algorithm~\thealgcnt~(#1)}}}]
}%
{\end{trivlist}
}
\renewenvironment{proof}
{~\newline \noindent \emph{Proof.}}
{\vspace*{-2ex}\qed~\newline}
\theoremstyle{remark}
\newtheorem{remark}[theorem]{Remark}
\newtheorem{example}[theorem]{Example}
\begin{document}

\title{New Solutions to the Firing Squad Synchronization Problem for Neural and Hyperdag P~Systems}
\author{Michael~J.~Dinneen \and Yun-Bum~Kim \and Radu~Nicolescu
\institute{Department of Computer Science, University of Auckland,\\
Private Bag 92019, Auckland, New Zealand
\email{\{mjd,radu\}@cs.auckland.ac.nz \ tkim021@aucklanduni.ac.nz}}}
\def\titlerunning{Firing Squad Synchronization Problem for P~Systems}
\def\authorrunning{M.J.~Dinneen, Y.-B.~Kim \& R.~Nicolescu}
\maketitle


\begin{abstract}
We propose two uniform solutions to an open question: the Firing 
Squad Synchronization Problem (FSSP), for hyperdag and symmetric
neural P~systems, with anonymous cells. 
Our solutions take $e_c+5$ and $6e_c+7$ steps, respectively, 
where $e_c$ is the eccentricity of the commander cell
of the dag or digraph underlying these P~systems.
The first and fast solution is based on a novel proposal, 
which dynamically extends P~systems with mobile channels.
The second solution is substantially longer, but is solely based 
on classical rules and static channels.
In contrast to the previous solutions, which work for tree-based P~systems, 
our solutions synchronize to any subset of the underlying digraph;
and do not require membrane polarizations or conditional rules, 
but require states, as typically used in hyperdag and neural P~systems. 
\end{abstract}

Keywords: P~systems, neural P~systems, hyperdag P~systems, synchronization, cellular automata.


\section{Introduction}

The \emph{Firing Squad Synchronization Problem} (FSSP) 
\cite{Kobayashi2005,Mazoyer1987,Noguchi2004,Schmid2004,Szwerinski1982,Umeo2002} 
is one of the best studied problems for cellular automata.
The problem involves finding a cellular automaton, 
such that, after a command is given, all the cells, after some finite time,
enter a designated \emph{firing} state \emph{simultaneously} and \emph{for the first time}.
Several variants of FSSP \cite{Schmid2004,Szwerinski1982}, have been proposed and studied. 
Studies of these variations mainly focus on finding a solution with as few 
states as possible and possibly running in optimum time.

There are several applications that require synchronization.  
We list just three here.
At the biological level, cell synchronization is a process by which cells 
at different stages of the cell cycle (division, duplication, replication) 
in a culture are brought to the same phase.
There are several biological methods used to synchronize cells at specific 
cell phases \cite{Humphrey2005}.  Once synchronized, monitoring the 
progression from one phase to another allows us to calculate the timing 
of specific cells' phases.
A second example relates to operating systems \cite{Silberschatz2004}, 
where process synchronization is the coordination of simultaneous threads or 
processes to complete a task without race conditions.  
Finally, in telecommunication networks \cite{Freeman2005}, 
we often want to synchronize computers to the same time, 
i.e., primary reference clocks should be used to avoid clock offsets. 

The synchronization problem has recently been studied in the framework
of P~systems.  Using tree-based P~systems, Bernardini \emph{et al} \cite{Bernardini2008} 
provided a non-deterministic with time complexity $3h$ 
and a deterministic solution with time complexity $4n+2h$,
where $h$ is the height of the tree structure underlying the P~system and 
$n$ is the number of membranes of the P~system.  
The deterministic solution requires membrane \emph{polarization} techniques
and uses a \emph{depth-first-search}.

More recently, Alhazov \emph{et al} \cite{AlhazovMV2008} described 
an improved deterministic algorithm for tree-based P~systems, that runs in $3h+3$ steps.
This solution requires conditional rules (promoters and inhibitors)
and combines a \emph{breadth-first-search}, a \emph{broadcast} and a \emph{convergecast}, 
algorithmic techniques with a high potential for parallelism.

In this paper, we continue the study of FSSP in the framework of P~systems, 
by providing solutions for hyperdag P~systems \cite{Nicolescu2008} and for neural
P~systems \cite{Paun2002} with symmetric communication channels.  
We propose deterministic solutions to a variant of FSSP \cite{Szwerinski1982}, 
in which there is a single commander, at an arbitrary position.  
We further generalize this problem by synchronizing a subset of cells of the considered
hyperdag or neural P~system. 

These more complex structures pose additional challenges,
not considered in the previous FSSP papers on tree-based P~systems,
such as multiple network sources (no single root) and multiple paths between cells.
Additionally, by allowing an arbitrary position for the commander,
we cannot anymore take advantage of the sense of direction between adjacent cells;
practically, our structures need to be treated as undirected graphs.

Our first solution uses simple rules, but requires \emph{dynamical structures}. 
In this paper we propose a \emph{novel extension}, which supports 
the creation of dynamical structures, by allowing \emph{mobile channels}.
This solution works for hP~systems and symmetric nP~systems; 
it will also work for tree-based P~systems,
but only if we reconsider them as dag-based P~systems,
because the resulting structures will be dags, not trees.
This solution takes $e_c+5$ steps, 
where $e_c$ is the eccentricity of the commander cell of the underlying dag or digraph. 
The relative simplicity and the speed of this solution supports our hypothesis
that basing P~systems on dag, instead of tree, structures allows 
more natural expressions of some fundamental distributed algorithms 
\cite{Nicolescu2008,RaduWMC2009}.

Our second solution is more traditional and does not require dynamical structures,
but is substantially more complex, 
combining a \emph{breadth-first-search}, a \emph{broadcast} and a \emph{convergecast}.
This solution works for tree-based P~systems, hP~systems and symmetric nP~systems and takes $6e_c+7$ steps. 
When restricted to P~systems, our algorithm takes more steps than 
Alhazov \emph{et al} \cite{AlhazovMV2008}, if the commander is the root node, 
but comparable to this, when the commander is a central node of an unbalanced rooted tree.

Our two solutions do not require polarizations or conditional rules,
but require \emph{states}, as defined for hyperdag and neural P~systems

Section~\ref{sec:prelim} provides background definitions 
and introduces the families of P systems considered for synchronization.  
Next, in Section~\ref{sec:PSystemsFSSP}, we cite the communication models for
hyperdag P~systems and neural P~systems, and the
transition and rewrite rules available for solving the FSSP.  
Our two FSSP solutions are described in
Sections~\ref{sec:StructuralExtensions} and \ref{sec:SolutionViaRules},
where we also illustrate the evolution of our FSSP algorithms. 
Finally, we end with some concluding remarks.


\section{Preliminary}
\label{sec:prelim}

A (binary) \emph{relation} $R$ over two sets $X$ and $Y$  
is a subset of their Cartesian product, $R \subseteq X \times Y$.
For $A \subseteq X$ and $B \subseteq Y$, we set
$R(A) = \{ y \in Y \mid \exists x \in A, (x,y) \in R \}$,  
$R^{-1}(B) = \{ x \in X \mid \exists y \in B, (x,y) \in R \}$.

A \emph{digraph} (directed graph) $G$ is a pair $(X, A)$, 
where $X$ is a finite set of elements called \emph{nodes} (or \emph{vertices}), 
and $A$ is a binary relation $A \subseteq X \times X$, of elements called \emph{arcs}. 
A length $n-1$ \emph{path} is a sequence of $n$ distinct nodes $x_1, \dots, x_n$, 
such that $\{ (x_1,x_2), \dots, (x_{n-1},x_n) \} \subseteq A$. 
A \emph{cycle} is a path $x_1, \dots, x_n$, 
where $n \ge 1 \mbox{ and } (x_n,x_1) \in A$.
A digraph is \emph{symmetric} if its relation $A$ is symmetric, i.e.,
$(x_1, x_2) \in A \Leftrightarrow (x_2, x_1) \in A$.
By default, all digraphs considered in this paper, 
and all structures from digraphs (dag, rooted tree, see below) 
will be \emph{weakly connected}, i.e., 
each pair of nodes is connected via a chain of arcs, where the arc direction is not relevant.

A \emph{dag} (directed acyclic graph) is a digraph $(X, A)$ without cycles.
For $x \in X$, $A^{-1}(x)$ are $x$'s \emph{parents},
$A(x)$ are $x$'s \emph{children}, and $A(A^{-1}(x)) \backslash \{x\}$ are $x$'s \emph{siblings}.

A \emph{rooted tree} is a \emph{special case of dag}, where each node has exactly one parent,
except a distinguished node, called \emph{root}, which has none. 

Throughout this paper, we will use the term \emph{graph} to denote a \emph{symmetric digraph} 
and \emph{tree} to denote a \emph{rooted tree}.

For a given tree, dag or digraph, we define $e_c$, the \emph{eccentricity} of a node $c$,
as the maximum length of a shortest path between $c$ and 
any other reachable node in the corresponding structure. 

For a tree, the set of \emph{neighbors} of a node $x$, $Neighbor(x)$, is 
the union of $x$'s parent and $x$'s children.
For a dag $\delta$ and node $x$, we define 
$Neighbor(x) = \delta(x) \cup \delta^{-1}(x) \cup \delta(\delta^{-1}(x)) \backslash \{x\}$,
if we want to include the siblings, or, $Neighbor(x) = \delta(x) \cup \delta^{-1}(x)$, otherwise.
For a graph $G=(X,A)$, we set $Neighbor(x) = A(x) = \{y \mid (x,y) \in A\}$.  
Note that, as defined, $Neighbor$ is always a symmetric relation.

A special node $c$ of a structure will be designated as  the \emph{commander}.
For a given commander $c$, we define the \emph{level} of a node $x$, 
$level_c(x) \in \mathbb{N}$, as the length of a shortest path between the $c$ and $x$, 
over the $Neighbor$ relation. 

For a given tree, dag or digraph and commander $c$, for nodes $x$ and $y$,
if $y \in Neighbor(x)$ and $level_c(y)=level_c(x)+1$,
then $x$ is a \emph{predecessor} of $y$ and
$y$ is \emph{successor} of $x$.
Similarly, a node $z$ is  a \emph{peer} of a node $x$, 
if $z \in Neighbor(x)$ and $level_c(z)=level_c(x)$.
Note that, for a given node $x$, the set of peers and the set of successors are disjoint.
A node without a \emph{successor} will be referred to as a \emph{terminal}.
We define $maxlevel_c = max \{level_c(x) \mid x \in X\}$ and
we note $e_c = maxlevel_c$.
A \emph{level-preserving path} from $c$ to a node $y$ is a sequence $x_0, \ldots, x_k$, 
such that $x_0 = c, x_k = y, x_i \in Neighbor(x_{i-1}), level_c(x_i) = i, 1 \le i \le k$. 
We further define $count_c(y)$ as the number of distinct level-preserving paths from $c$ to $y$.
 
The level of a node and number of level-preserving paths to it can be determined by a standard breadth-first-search, 
as shown in Algorithm~\ref{alg:bfs}.
Intuitively, this algorithm defines a \emph{virtual dag} based on successor relation 
and, if the original structure is a tree, 
this algorithm will ``reset'' the root at another node in that tree. 

\begin{algorithm}{Determine levels and count level-preserving paths}
\label{alg:bfs}~

\begin{itemize}
\item INPUT: A tree, dag or digraph, with nodes $\{1,\ldots,n\}$ 
    and a commander $c \in \{1,\ldots,n\}$.

\item OUTPUT: The arrays $level_c[\,]$  and $count_c[\,]$ of shortest distances and number of level-preserving paths from $c$ to each node in the structure,
over the $Neighbor$ relation.
\end{itemize}
    
\hspace*{16pt}\begin{minipage}{5in}
\begin{tabbing}
\textbf{array} $level_c[1,\ldots,n] = [-1,\ldots,-1]$; $count_c[1,\ldots,n] = [0,\ldots, 0]$ \\
\textbf{queue} $Q = ()$\\
$Q \Leftarrow c$\\
$level_c[c] = 0$; $count_c[c]=1$\\
\textbf{while} \= $Q \neq ()$ \textbf{do}\\
    \> \= $x \Leftarrow Q$\\
    \> \textbf{for} \= \textbf{each} $y \in Neighbor(x)$ \textbf{do}\\
    \>    \> \textbf{if} \= $level_c[y] = -1$ \textbf{then}\\
    \>\>    \>	$Q \Leftarrow y$\\
    \>\>\>	    $level_c[y]=level_c[x]+1$\\
    \>    \> \textbf{if} \= $level_c[y] = level_c[x]+1$ \textbf{then}\\
    \>\>\>	    $count_c[y]=count_c[y]+count_c[x]$\\
\textbf{return} $level_c$
\end{tabbing}
\end{minipage}
\end{algorithm}

\begin{example}
Figures~\ref{fig:Tree-3}, \ref{fig:Dag-NoSiblings} and \ref{fig:Digraph-1} 
show $level_c$, \emph{predecessors}, \emph{successors}, \emph{peers} and $count_c$, 
for a tree, a dag and a digraph structure, respectively.
Small side-arrows indicate the arcs traversed while computing the levels, 
over the induced $Neighbor$ relation, as described in Algorithm~\ref{alg:bfs}.
\end{example}

\begin{figure}[h]
\begin{tabular}{lll}
\begin{minipage}{2.2in}
\centerline{\includegraphics[scale=0.9]{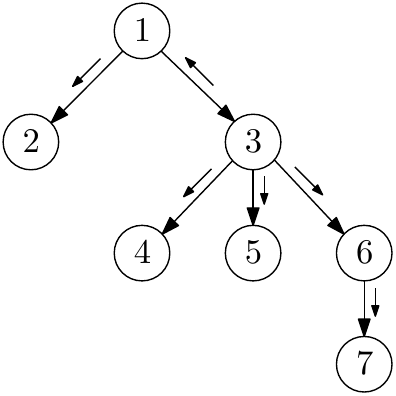}}
\end{minipage} &
\begin{minipage}{3in}
\begin{center}
\begin{tabular}{ | l | l | l | l | l | l | l | l | }
    \hline
    Node & $level_c$ & \emph{predecessors} & \emph{successors} & \emph{peers} & $count_c$ \\ \hline
    $1$ & $1$ & $3$ & $2$ & $-$ & $1$ \\ \hline
    $2$ & $2$ & $1$ & $-$ & $-$ & $1$ \\ \hline
    $3$ & $0$ & $-$ & $1,4,5,6$ & $-$ & $1$ \\ \hline
    $4$ & $1$ & $3$ & $-$ & $-$ & $1$ \\ \hline
    $5$ & $1$ & $3$ & $-$ & $-$ & $1$ \\ \hline
    $6$ & $1$ & $3$ & $7$ & $-$ & $1$ \\ \hline
    $7$ & $2$ & $6$ & $-$ & $-$ & $1$ \\ \hline
\end{tabular}
\end{center}
\end{minipage}
\end{tabular}
\caption{Left: a tree (taken from Bernardini \emph{et al} \cite{Bernardini2008}), 
with commander $c=3$, $e_3=2$; 
Right: table with node levels, predecessors, successors, peers and $count_c$'s.}
\label{fig:Tree-3}
\vspace*{-1.2cm}
\end{figure}

\begin{figure}[h]
\vspace*{1.0cm}
\begin{tabular}{ll}
\begin{minipage}{2.2in}
\centerline{\includegraphics[scale=0.85]{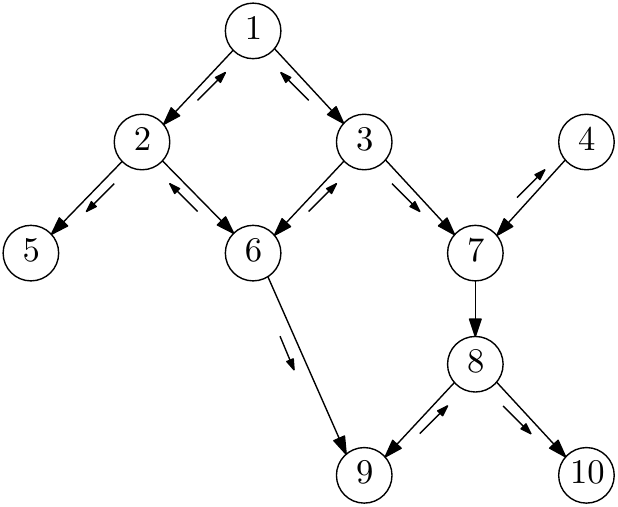}}
\end{minipage} &
\begin{minipage}{3in}
\begin{center}
\begin{tabular}{ | l | l | l | l | l | l | l | l | }
    \hline
    Node & $level_c$ & \emph{predecessors} & \emph{successors} & \emph{peers} & $count_c$ \\ \hline
    $1$ & $2$ & $2,3$ & $-$ & $-$ & $2$ \\ \hline
    $2$ & $1$ & $6$ & $1,5$ & $-$ & $1$ \\ \hline
    $3$ & $1$ & $6$ & $1,7$ & $-$ & $1$ \\ \hline
    $4$ & $3$ & $7$ & $-$ & $-$ & $1$ \\ \hline
    $5$ & $2$ & $2$ & $-$ & $-$ & $1$ \\ \hline
    $6$ & $0$ & $-$ & $2,3,9$ & $-$ & $1$ \\ \hline
    $7$ & $2$ & $3$ & $4$ & $8$ & $1$ \\ \hline
    $8$ & $2$ & $9$ & $10$ & $7$ & $1$ \\ \hline
    $9$ & $1$ & $6$ & $8$ & $-$ & $1$ \\ \hline
    $10$ & $3$ & $8$ & $-$ & $-$ & $1$ \\ \hline
\end{tabular}
\end{center}
\end{minipage}
\end{tabular}
\caption{Left: a dag with commander $c=6$, $e_6=3$ (siblings excluded);
Right: table with node levels, predecessors, successors, peers and $count_c$'s.}
\label{fig:Dag-NoSiblings}
\vspace*{-1in}
\end{figure}

\begin{figure}[h]
\vspace*{.9in}
\begin{tabular}{ll}
\begin{minipage}{2.2in}
\centerline{\includegraphics[scale=0.9]{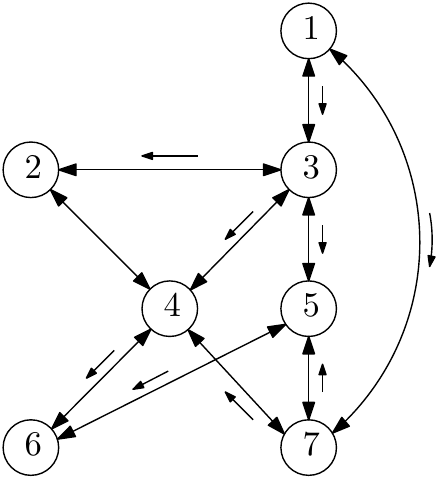}}
\end{minipage} &
\begin{minipage}{3in}
\begin{center}
\begin{tabular}{ | l | l | l | l | l | l | l | l | }
    \hline
    Node & $level_c$ & \emph{predecessors} & \emph{successors} & \emph{peers} & $count_c$ \\ \hline
    $1$ & $0$ & $-$ & $3,7$ & $-$ & $1$ \\ \hline
    $2$ & $2$ & $3$ & $-$ & $4$ & $1$ \\ \hline
    $3$ & $1$ & $1$ & $2,4,5$ & $-$ & $1$ \\ \hline
    $4$ & $2$ & $3,7$ & $6$ & $2$ & $2$ \\ \hline
    $5$ & $2$ & $3,7$ & $6$ & $-$ & $2$ \\ \hline
    $6$ & $3$ & $4,5$ & $-$ & $-$ & $4$ \\ \hline
    $7$ & $1$ & $1$ & $4,5$ & $-$ & $1$ \\ \hline
\end{tabular}
\end{center}
\end{minipage}
\end{tabular}
\caption{Left: a graph with commander $c=1$, $e_1=3$;
Right: table with node levels, predecessors, successors, peers and $count_c$'s.}
\label{fig:Digraph-1}
\vspace*{-1in}
\end{figure}


\clearpage

\section{P~Systems and the Firing Squad Synchronization Problem}
\label{sec:PSystemsFSSP}

In this section, we briefly recall several fundamental definitions for P~systems
and describe a P~systems version of the Firing Squad Synchronization Problem (FSSP).

For the definitions of tree-based P~systems, see P\u{a}un \cite{Paun2002}.
Here we reproduce the basic definitions of dag-based hyperdag P~systems,
from our previous work \cite{Nicolescu2008} and 
digraph-based neural P~systems, from P\u{a}un \cite{Paun2002}.
 
\begin{definition}[Hyperdag P~systems \cite{Nicolescu2008}]
	A \emph{hyperdag P~system} (of order $n$), in short an \emph{hP~system}, 
	is a system $\Pi_h = (O,\sigma_1,\dots,\sigma_n,\delta,I_{out})$, where:
	
	\begin{enumerate}
		\item $O$ is an ordered finite non-empty alphabet of \emph{objects};
	  \item $\sigma_1,\dots,\sigma_n$ are \emph{cells}, of the form 
					$\sigma_i = (Q_i,s_{i,0},w_{i,0},P_i)$, $1 \leq i \leq n$, where:  
				  \begin{itemize}
					  \item $Q_i$ is a finite set (of \emph{states}),
					  \item $s_{i,0} \in Q_i$ is the \emph{initial state},
					  \item $w_{i,0}\in O^*$ is the \emph{initial multiset} of objects,
					  \item	$P_i$ is a finite set of multiset \emph{rewrite rules} of the form: 
					  			$sx\rightarrow s'x'u_{\uparrow}v_{\downarrow}w_{\leftrightarrow}y_{go}z_{out}$, 
								  where $s,s'\in Q_i$, $x,x'\in O^{*}$, 
								  $u_{\uparrow}\in O_{\uparrow}^{*}$, 
								  $v_{\downarrow}\in O_{\downarrow}^{*}$, 
								  $w_{\leftrightarrow}\in O_{\leftrightarrow}^{*}$, 
								  $y_{go}\in O_{go}^{*}$ and $z_{out}\in O_{out}^{*}$, 
								  with the restriction that $z_{out}=\lambda$ for all $i\in \{1,\dots,n\} \backslash I_{out}$. 
				  \end{itemize}
		\item $\delta$ is a set of dag parent-child arcs on $\{1,\dots,n\}$,
					i.e., $\delta \subseteq \{1,\dots,n\} \times \{1,\dots,n\}$, 
					representing \emph{duplex} communication channels between cells;
		\item $I_{out} \subseteq \{1,\dots,n\}$ indicates the \emph{output cells}, 
					the only cells allowed to send objects to the ``environment''.
	\end{enumerate}
\end{definition}

\begin{definition}[Neural P~systems \cite{Paun2002}]
	A \emph{neural P~system} (of order $n\geq 1$), in short an \emph{nP~system}, 
	is a system $\Pi_n = (O,\sigma_1,\dots,\sigma_n,syn,i_{out})$, where:

	\begin{enumerate}
		\item $O$ is an ordered finite non-empty alphabet of \emph{objects};
		\item $\sigma_1,\dots,\sigma_n$ are \emph{cells}, of the form 
					$\sigma_i = (Q_i,s_{i,0},w_{i,0},P_i)$, $1\leq i\leq n$, where:
				  \begin{itemize}
					  \item $Q_i$ is a finite set (of \emph{states}),
					  \item $s_{i,0} \in Q_i$ is the \emph{initial state},
					  \item $w_{i,0}\in O^*$ is the \emph{initial multiset} of objects,
					  \item	$P_i$ is a finite set of multiset \emph{rewrite rules} 
								  of the form: $sx\rightarrow s'x'y_{go}z_{out}$, 
								  where $s,s'\in Q_i$, $x,x'\in O^{*}$, $y_{go}\in O_{go}^{*}$ and $z_{out}\in O_{out}^{*}$, 
								  with the restriction that $z_{out}=\lambda$ for all $i\in \{1,\dots,n\} \backslash \{i_{out}\}$. 
				  \end{itemize}	
		\item $syn$ is a set of \emph{digraph} arcs on $\{1,\dots,n\}$,
					i.e., $syn \subseteq \{1,\dots,n\} \times \{1,\dots,n\}$, 
					representing \emph{unidirectional} communication channels between cells, known as \emph{synapses};
		\item $i_{out} \in \{1,\dots,n\}$ indicates the \emph{output cell}, the only cell 
					allowed to send objects to the ``environment''.
	\end{enumerate}
\end{definition}

A \emph{symmetric} nP~system, (here) in short, a \emph{snP~system}, is an nP~system where the 
underlying digraph $syn$ is symmetric (i.e., a graph).
For further definitions describing the evolution of hP and nP~systems, 
such as \emph{configuration}, \emph{rewrite modes}, \emph{transfer modes}, \emph{transition steps}, 
\emph{halting} and \emph{results}, see our previous work \cite{Nicolescu2008}.
For all structures, we also utilize the \emph{weak policy} 
for applying \emph{priorities} to rules, as defined by P\u{a}un \cite{Paun2006}.

\begin{remark}
\label{rem:GoTag}
Most of the P~systems considered here (i.e., nP~systems, snP~systems, hP~systems with siblings and
hP~systems without siblings) define a tag $go$ that sends a multiset of objects along the 
previously defined $Neighbor$ relation.
Traditional tree-based P~systems do not directly provide this facility, however, 
it can be easily provided by the union of $out$ and $in!$ target indications, 
that represent sending ``to parent'' and ``to all children'', respectively.
That is, $(w,go) \equiv (w,out)(w,in!)$.
\end{remark}

\begin{definition}[FSSP for P~systems with states---informal definition]
\label{def:FSSP-P-B}
We are given a P, hP, snP or nP~system with $n$ cells, $\{ \sigma_1,\dots,\sigma_n \}$, 
where all cells have the \emph{same states set} and \emph{same rules set}.
Two states are distinguished: an \emph{initial} state $s_0$ and a \emph{firing} state $s_\phi$.
We select an arbitrary \emph{commander} cell $\sigma_c$ and an arbitrary subset of \emph{squad} cells, 
$F \subseteq \{\sigma_1,\dots,\sigma_n\}$ (possibly the whole set), that we wish to synchronize;
the commander itself may or may not be part of the firing squad.
At startup, all cells start in the initial state $s_0$; 
the commander and the squad cells may contain specific objects, but all other cells are empty. 
Initially, all cells, except the commander, are idle, 
and will remain idle until they receive a message.
The commander sends one or more orders, to one or more of its neighbors, 
to start and control the synchronization process.
Idle cells may become active upon receiving a first message.
Notifications may be relayed to all cells, as necessary.
Eventually, all cells in the squad set $F$ will enter the designated firing state $s_\phi$, 
\emph{simultaneously} and for the \emph{first time}. 
At that time, all the other cells have reached a different state, typically $s_0$ or $s_1$,
without ever passing through the firing state $s_\phi$.
Optionally, at that time, all cells should be empty.
\end{definition}

In this paper, we propose two new deterministic FSSP solutions,
that are described in the next two sections.
Our two solutions do not require \emph{polarities} or \emph{conditional} rules, 
but require \emph{priorities} and \emph{states}. 
Both hP~systems and snP~systems already have states, by definition.
However, it seems that traditional tree-based P~systems have not used states so far, or not much.


\section{FSSP---Dynamic Structures via Mobile Channels}
\label{sec:StructuralExtensions}

In this section, we further refine our solution given in an earlier paper \cite{RaduWMC2009}. 
A natural solution is possible when we are 
allowed to extend the cell structure of the given hP or snP~system. 
We achieve this by supporting mobile channels. 
The endpoints of our mobile channels appear in the rules like all other objects
and are subject to usual rewriting and transfer rules.
The end result is a channel that grows step-by-step, 
not unlike a nerve which extends in a growing or regrowing tissue. 

We first extend the original hP or nP~system by an external cell, 
which will be called the \emph{sergeant}. 
Next, this sergeant will send a self-replicating mobile endpoint, 
that will be repeatedly broadcasted, until all cells are reached.
A mobile endpoint will leave a fixed endpoint in a squad cell and 
disappear without trace from the other cells.
In the end, the structure will be extended with new channels which will
link the sergeant with all squad cells.
Finally, when there are no more structural changes, 
the sergeant, will send a firing command to all squad cells, 
prompting these cells to enter the firing state, all at the same time.

Our algorithm uses the following special objects, which can in principle be 
rewritten and transferred as all other traditional objects, but, 
at the same time, are also endpoints for dynamically created channels:
\begin{itemize}
\item $\alpha$ is here the fixed endpoint of all dynamically created channels (here we use only one $\alpha$)
\item $\theta$ is a mobile endpoint of a dynamically created channel (here this symbol is further processed
by rewriting and transfer rules)
\item $\omega$ is a new fixed endpoint of a dynamically created channel (here this symbol will remain fixed)
\end{itemize}
Briefly, the initial structure is dynamically extended by all arcs $(\sigma_a, \sigma_t)$,
where $\sigma_a$ is the cell that contains $\alpha$ and $\sigma_t$ is a cell that contains $\theta$ or $\omega$.

The following algorithm assumes that the first step has already been completed, i.e., 
the sergeant was already created by one of the existing cell creation or division rules, 
already available for P~systems.

\bigskip

\begin{algorithm}{FSSP---Dynamic structures via mobile channels}
\label{alg:dynamic}
\Precondition{ 
An hP or nP~system, with $n$ cells $\sigma_1, \dots, \sigma_n$, a commander cell $\sigma_c$
and a set of squad cells $F$ to be synchronized.
Additionally, we already have a sergeant cell, $\sigma_{n+1}$, linked to the commander by
one arc, $(\sigma_{n+1},\sigma_c)$, for hP~systems, 
or by two arcs, $(\sigma_{n+1},\sigma_c)$, $(\sigma_c,\sigma_{n+1})$, for snP~systems.

All cells start in the state $s_0$ and have the same rules. 
The state $s_\phi$ is the firing state.
Initially, the sergeant $\sigma_{n+1}$ is marked by one object $\alpha$,
and each squad cell is marked by one object $f$ 
(this can include the commander $\sigma_c$ or the sergeant $\sigma_{n+1}$, or both);
all other cells have no objects.
}
\Postcondition{
All cells in the set $F$ enter state $s_\phi$, 
simultaneously and for the first time, after $e_c+5$ steps,
where $e_c$ is the commander's eccentricity in the underlying graph.
All other cells enter state $s_1$, without ever passing through state $s_\phi$.
}

\Rules{
(rules are applied under the weak interpretation of priorities, 
in the rewrite mode $\alpha = min$ and transfer mode $\beta = repl$):
\begin{enumerate}
\begin{tabular}{ll}
\begin{minipage}{2.4in}
	\item $s_0 \alpha \rightarrow s_2 \alpha \theta_{go}$
	\item $s_0 f \theta \rightarrow s_4 \omega \theta_{go}$
	\item $s_0 \theta \rightarrow s_1 \theta_{go}$
  \item $s_1 \theta \rightarrow s_1$
	\item $s_2 \alpha \rightarrow s_3 \alpha$
\end{minipage}
\begin{minipage}{2.4in}
	\item $s_3 \theta \rightarrow s_3$
	\item $s_3 f \alpha \rightarrow s_4 f \alpha \phi \phi_{go}$
	\item $s_3 \alpha \rightarrow s_1 \alpha \phi_{go}$
	\item $s_4 f \phi \rightarrow s_\phi$
	\item $s_4 \theta \rightarrow s_4$
\end{minipage}
\end{tabular}
\end{enumerate}
}
\end{algorithm}

\begin{example}
\label{ex:sync}
Figure~\ref{fig:spider-sync} and Table~\ref{tab:trace} illustrate this algorithm for an hP~system 
based on the dag of Figure~\ref{fig:Tree-3}. 
Here, the commander cell is $\sigma_3$, the squad set is $F = \{\sigma_1, \dots, \sigma_5\}$
and this system's structure has already been extended by the sergeant cell $\sigma_8$ 
and the arc $(\sigma_8, \sigma_3)$. The mobile channels are represented by dotted arrows.

\begin{figure}[h]
\begin{tabular}{llll}
\begin{minipage}{1.5in}
\centerline{\includegraphics[scale=0.7]{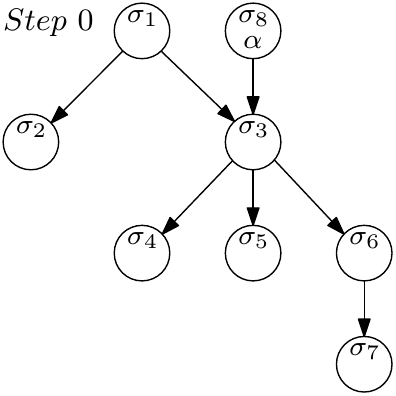}}
\end{minipage} 
\begin{minipage}{1.5in}
\centerline{\includegraphics[scale=0.7]{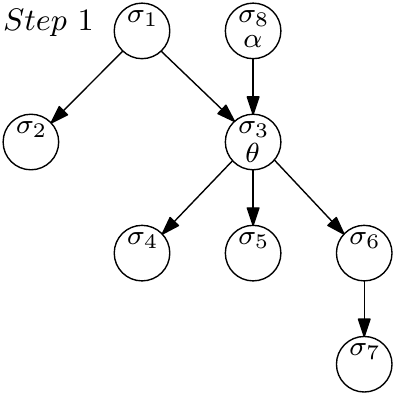}}
\end{minipage} 
\begin{minipage}{1.5in}
\centerline{\includegraphics[scale=0.7]{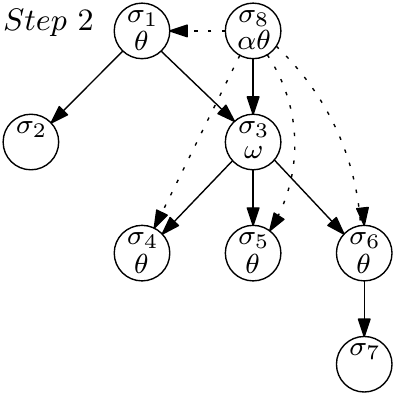}}
\end{minipage} 
\begin{minipage}{1.5in}
\centerline{\includegraphics[scale=0.7]{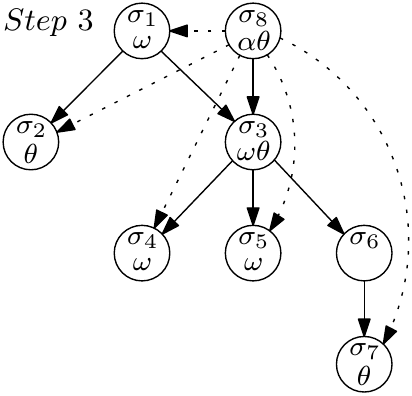}}
\end{minipage} 
\end{tabular}

\begin{tabular}{llll}
\begin{minipage}{1.5in}
\centerline{\includegraphics[scale=0.7]{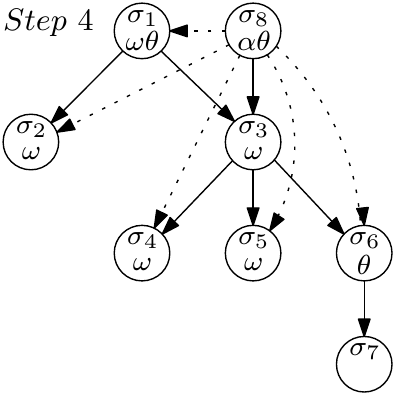}}
\end{minipage} 
\begin{minipage}{1.5in}
\centerline{\includegraphics[scale=0.7]{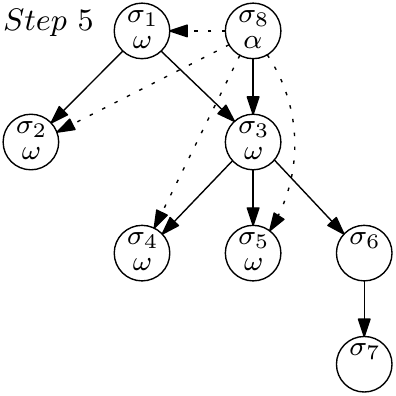}}
\end{minipage} 
\begin{minipage}{1.5in}
\centerline{\includegraphics[scale=0.7]{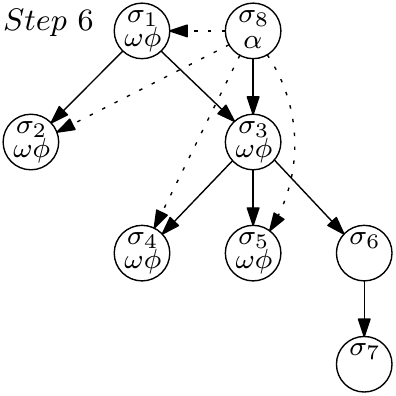}}
\end{minipage} 
\begin{minipage}{1.5in}
\centerline{\includegraphics[scale=0.7]{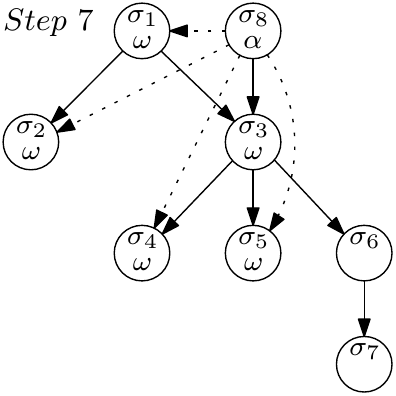}}
\end{minipage} 
\end{tabular}

\caption{Running Algorithm~\ref{alg:dynamic} on the hP~system of Example~\ref{ex:sync}.}
\label{fig:spider-sync}
\end{figure}

\begin{table}[h]
\begin{center}
\caption{Traces for the hP~system of Example~\ref{ex:sync}.}
\label{tab:trace}
\begin{tabular}{ | l | l | l | l | l | l | l | l | l | }
\hline                                                                                                                    
Step$\backslash$Cell & $\sigma_8$ & $\sigma_3$ & $\sigma_1$ & $\sigma_4$ & $\sigma_5$ & $\sigma_6$ & $\sigma_2$ & $\sigma_7$ \\ \hline
$0$ & $s_0 \alpha$ & $s_0 f$ & $s_0 f$ & $s_0 f$ & $s_0 f$ & $s_0$ & $s_0 f$ & $s_0$ \\ \hline
$1$ & $s_2 \alpha$ & $s_0 f \theta$ & $s_0 f$ & $s_0 f$ & $s_0 f$ & $s_0$ & $s_0 f$ & $s_0$ \\ \hline
$2$ & $s_3 \alpha \theta$ & $s_4 f \omega$ & $s_0 f \theta$ & $s_0 f \theta$ & $s_0 f \theta$ & $s_0 \theta$ & $s_0 f$ & $s_0$ \\ \hline
$3$ & $s_3 \alpha \theta^4$ & $s_4 f \omega \theta^4$ & $s_4 f \omega$ & $s_4 f \omega$ & $s_4 f \omega$ & $s_1$ & $s_0 f \theta$ & $s_0 \theta$ \\ \hline
$4$ & $s_3 \alpha \theta^2$ & $s_4 f \omega$ & $s_4 f \omega \theta$ & $s_4 f \omega$ & $s_4 f \omega$ & $s_1 \theta$ & $s_4 f \omega$ & $s_1$ \\ \hline
$5$ & $s_3 \alpha$ & $s_4 f \omega$ & $s_4 f \omega$ & $s_4 f \omega$ & $s_4 f \omega$ & $s_1$ & $s_4 f \omega$ & $s_1$ \\ \hline
$6$ & $s_1 \alpha$ & $s_4 f \omega \phi$ & $s_4 f \omega \phi$ & $s_4 f \omega \phi$ & $s_4 f \omega \phi$ & $s_1$ & $s_4 f \omega \phi$ & $s_1$ \\ \hline
$7$ & $s_1 \alpha$ & $s_\phi \omega$ & $s_\phi \omega$ & $s_\phi \omega$ & $s_\phi \omega$ & $s_1$ & $s_\phi \omega$ & $s_1$ \\ \hline
\end{tabular}
\end{center}
\end{table}

\end{example}


\bigskip

\section{FSSP---Static Structures and Rules}
\label{sec:SolutionViaRules}

Here we consider a second scenario, 
where we are allowed to modify the rules of the given hP or nP~system, 
but not its original structure.
A brief description of this solution follows.
The commander intends to send an order to all cells in the set $F$,
which will prompt them to synchronize by entering the designated firing state.
However, in general, the commander does not have direct communication channels with all the cells. 
In this case, the process of sending a command to the destination cell will cause delays (some steps),
as the command is relayed through intermediate cells. 
Hence, to ensure all firing squad cells enter the firing state simultaneously,
each firing squad cell determines the number of steps it needs to wait before 
entering the firing state.

As in our earlier paper \cite{RaduWMC2009}, 
cells have no built-in knowledge of the network topology.
Additionally, cells are anonymous, i.e., not identified by cell IDs, 
and not implicitly named by membrane polarization techniques.
The cells are initially empty, except the commander, which is initially marked by one $a$,
and the squad cells, which are initially marked by one $f$ each. 
All cells start with the same set of rules,
which are applied in the $max$ rewrite mode, using \emph{weak} priorities, and the $repl$ transfer mode.
In the proofs, all rules that are concurrently applied will be grouped together
within parentheses; e.g., $(x,y),z$ indicates two steps,
first rules $x$ and $y$, concurrently executed, followed by rule $z$.

Each cell \emph{independently} progresses through \emph{four phases}, 
called FSSP-I, FSSP-II, FSSP-III and FSSP-IV, which are detailed in  
Algorithms~\ref{alg:FSSP-I},~\ref{alg:FSSP-II},~\ref{alg:FSSP-III} and \ref{alg:FSSP-IV}, respectively. 
An overview of these four phases is as follows:

\begin{itemize}
\item Phase FSSP-I is a \emph{broadcast} from the commander, that follows the virtual dag defined by $level_c$.
      This phase starts in state $s_0$ and ends in state $s_2$.
			Also, the commander starts a counter, which, at the end of Phase FSSP-II, 
			will determine its eccentricity.
			
\item Phase FSSP-II is a subsequent \emph{convergecast} from terminal cells, 
			that follows the same virtual dag.
			This phase starts in state $s_2$ and ends when the commander enters state $s_6$.
			At the end of this phase, the commander's counter determines its eccentricity.

\item Phase FSSP-III is a second \emph{broadcast}, initiated from the commander, 
			that follows the same virtual dag. 
			This phase starts in state $s_6$ and ends in state $s_8$.
			The commander sends out its eccentricity, 
			which is successively decremented at each level.

\item Phase FSSP-IV is a \emph{timing} (countdown) for entering the firing state.
			This phase starts in state $s_8$ and continues with a countdown, until 
			squad cells simultaneously enter the firing state $s_9$, 
			and all other cells enter state $s_0$.
\end{itemize}

The statechart in Figure~\ref{fig:StateDiagram} illustrates the combined flow of these four phases.
The nodes represent the states of the hP or nP~system and the arcs are labelled 
with numbers of the rules that match the corresponding transitions.
The rest of this section describes these four phases, proving their correctness and time complexities.
A sample run of our algorithm will follow at the end of this section, in Example~\ref{ex:FSSP-trace}.

\begin{figure}[h]
\centerline{\includegraphics[scale=0.75]{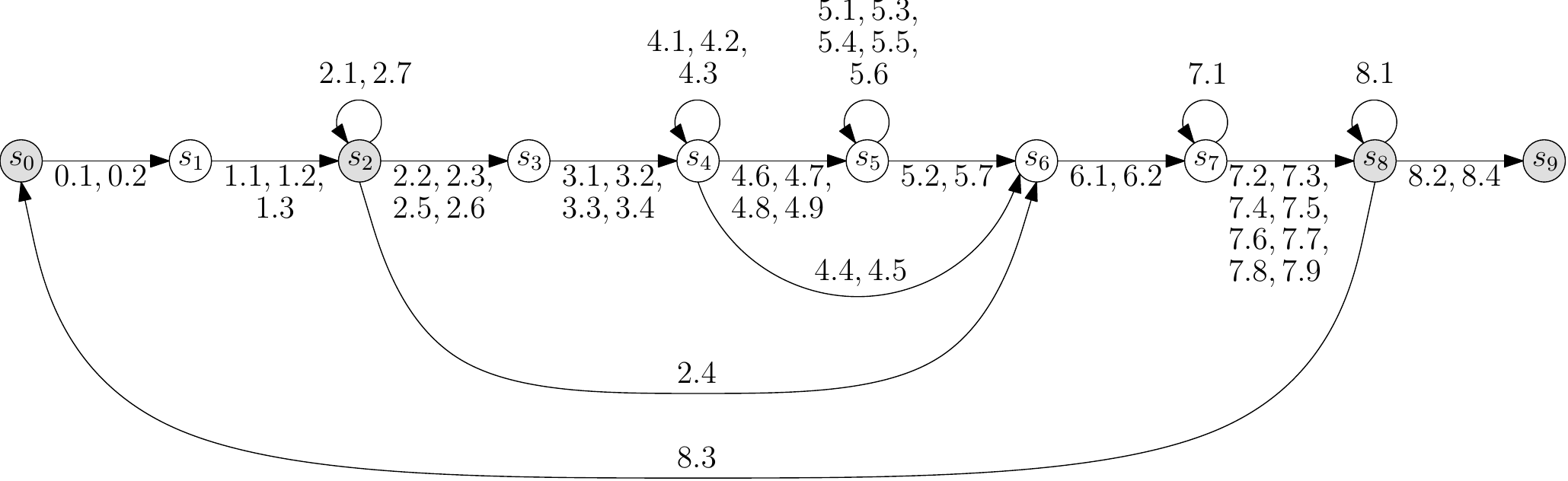}}
\caption{Statechart view of the combined FSSP algorithm phases.}
\label{fig:StateDiagram}
\end{figure}

\bigskip

\noindent \framebox{\textbf{FSSP: The initial configuration}}
\begin{itemize}
	\item $\Gamma = \{\sigma_1,\dots,\sigma_n\}$, $n>1$, is the set of all cells,
				$\sigma_c$ is the commander,
				and the firing squad is $F \subseteq \Gamma$;
	\item $O = \{a,b,c,d,e,f,g,h,k,l,p,q\}$;
	\item $Q_i = \{s_0, s_1, s_2, s_3, s_4, s_5, s_6, s_7, s_8, s_9\}$, for $i \in \{1,\dots,n\}$,
				which is ``allocated'' to four phases as follows:
  			FSSP-I contains rules for states $\{s_0, s_1\}$;
				FSSP-II contains rules for states $\{s_2, s_3, s_4, s_5, s_6\}$;
				FSSP-III contains rules for states $\{s_6, s_7\}$;
				FSSP-IV contains rules for states $\{s_8, s_9\}$;
	\item $s_\phi = s_9$ is the firing state;
	\item $s_{i,0} = s_0$, for $i \in \{1,\dots,n\}$;
	\item $w_{c,0} = \{a\}$, if $\sigma_c \notin F$, or $\{a,f\}$, otherwise;
				$w_{i,0} = \{f\}$ for all $\sigma_i \in F \setminus \sigma_c$;
				$w_{i,0} = \emptyset$, 
				for all $\sigma_i \in \Gamma \setminus ( F \cup \{ \sigma_c \} )$;			

	\item The following rules are applied under the weak interpretation of priorities, 
in the rewrite mode $\alpha = min$ and transfer mode $\beta = repl$:
	
		\begin{tabular}[t]{ll}
		\begin{minipage}[t]{2.0in}
			\begin{enumerate}
			\setcounter{enumi}{-1}
			\item For state $s_0$:
				\begin{enumerate}[1)]
					\item $s_0 a \rightarrow s_1 a e d_{go}$
					\item $s_0 d \rightarrow s_1 a d_{go}$
				\end{enumerate}
			\item For state $s_1$:
				\begin{enumerate}[1)]			
					\item $s_1 a e \rightarrow s_2 a ee k$
					\item $s_1 a \rightarrow s_2 a k$
					\item $s_1 d \rightarrow s_2 l$
				\end{enumerate}
			\item For state $s_2$:
				\begin{enumerate}[1)]
					\item $s_2 k \rightarrow s_2$
					\item $s_2 a e \rightarrow s_3 a ee$
					\item $s_2 d \rightarrow s_3 d$
					\item $s_2 a \rightarrow s_6 a c_{go}$
					\item $s_2 l \rightarrow s_3 l g_{go}$
					\item $s_2 g \rightarrow s_3$
					\item $s_2 a e \rightarrow s_2 a ee$
				\end{enumerate}
			\item For state $s_3$:
				\begin{enumerate}[1)]
					\item $s_3 a e\rightarrow s_4 a ee$
					\item $s_3 a \rightarrow s_4 a$
					\item $s_3 g \rightarrow s_4 p$
					\item $s_3 c \rightarrow s_4$
				\end{enumerate}
			\end{enumerate}
		\end{minipage}
		\begin{minipage}[t]{2.0in}
			\begin{enumerate}
			\setcounter{enumi}{3}
			\item For state $s_4$:
				\begin{enumerate}[1)]
					\item $s_4 c d \rightarrow s_4$
					\item $s_4 a d e \rightarrow s_4 a d ee$
					\item $s_4 d \rightarrow s_4 d$
					
					\item $s_4 a eeeee \rightarrow s_6 a eee$
					\item $s_4 eeeee \rightarrow s_6 e$
			
					\item $s_4 a \rightarrow s_5 a k$
					\item $s_4 l \rightarrow s_5 l h_{go}$
			
					\item $s_4 h \rightarrow s_5$
					\item $s_4 q \rightarrow s_5$
					
					\item $s_4 c \rightarrow s_6$ 
					\item $s_4 g \rightarrow s_6$ 
					\item $s_4 h \rightarrow s_6$ 
					\item $s_4 q \rightarrow s_6$ 
				\end{enumerate}
			\item For state $s_5$:
				\begin{enumerate}[1)]
					\item $s_5 k \rightarrow s_5$
					\item $s_5 a \rightarrow s_6 a c_{go}$
					\item $s_5 h p \rightarrow s_5 p$
					\item $s_5 p q \rightarrow s_5$
					\item $s_5 p \rightarrow s_5 k p$
					\item $s_5 l \rightarrow s_5 l h_{go}$
					\item $s_5 l \rightarrow s_6 q_{go}$
				\end{enumerate}
			\end{enumerate}
		\end{minipage}
		\begin{minipage}[t]{2.0in}
			\begin{enumerate}
			\setcounter{enumi}{5}
			\item For state $s_6$:
				\begin{enumerate}[1)]
					\item $s_6 a e \rightarrow s_7 a k$
					\item $s_6 e \rightarrow s_7 b e_{go}$
					
					\item $s_6 c \rightarrow s_6$ 
					\item $s_6 g \rightarrow s_6$ 
					\item $s_6 h \rightarrow s_6$ 
					\item $s_6 p \rightarrow s_6$ 
					\item $s_6 q \rightarrow s_6$ 
				\end{enumerate}
			\item For state $s_7$:
				\begin{enumerate}[1)]
					\item $s_7 k \rightarrow s_7$
					\item $s_7 a \rightarrow s_8 a$	
					\item $s_7 e \rightarrow s_8$ 
				\end{enumerate}
			\item For state $s_8$:
				\begin{enumerate}[1)]						
					\item $s_8 a b \rightarrow s_8 a$
					\item $s_8 a f \rightarrow s_9$
					\item $s_8 a \rightarrow s_0$
					\item $s_8 a \rightarrow s_9$
				\end{enumerate}
			\end{enumerate}
		\end{minipage}
		\end{tabular}
\end{itemize}


\begin{algorithm}{FSSP-I: First broadcast from the commander}
\label{alg:FSSP-I}
\Precondition{
The initial configuration as specified earlier.
}
\Postcondition{~
\begin{itemize}
	\item The end state is $s_2$.
	\item A cell $\sigma_i$ has 
	\begin{itemize}
		\item $count_c(i)$ copies of $a$ and $count_c(i)$ copies of $k$;
		\item $u$ copies of $l$, 
					where	$u$ is the total number of $a$'s in $\sigma_i$'s \emph{peers}; 
		\item $v$ copies of $d$, 
					where	$v$ is the total number of $a$'s in $\sigma_i$'s \emph{successors}; 
		\item two copies of $e$, if $\sigma_i = \sigma_c$;
		\item one copy of $f$, if $\sigma_i \in F$.
	\end{itemize}
\end{itemize}
}

\end{algorithm}

\begin{proof}
This phase of the algorithm is a \emph{broadcast} that follows the virtual dag 
created by the \emph{levels} determined by Algorithm~\ref{alg:bfs}. 

Consider a cell $\sigma_i$. By induction:
\begin{itemize}
\item At step $level_c(i)$, $\sigma_i$ (except the commander) 
receives a total of $count_c(i)$ copies of $d$ from its \emph{predecessors}.

\item At step $level_c(i)+1$, $\sigma_i$ broadcasts $count_c(i)$ copies of $d$ 
to each of its \emph{neighbors} and transits to state $s_1$.
At the same time, $\sigma_i$ accumulates one local copy of $a$ for each sent $d$,
for a total count of $count_c(i)$ of $a$'s.
Also, $\sigma_i$ receives $u$ copies of $d$, similarly sent by its \emph{peers}, 
where $u$ is equal to the total number of $a$'s similarly accumulated, at the same time step, 
by $\sigma_i$'s {peers}.

\item At step $level_c(i)+2$, $\sigma_i$ receives $v$ copies of $d$, sent back by its \emph{successors};
and transits to state $s_2$, 
where	$v$ is equal to the total number of $a$'s created, at the same time step, 
by $\sigma_i$'s {successors}; 
\end{itemize}

The commander, by initially having one $a$, creates two copies of $e$.
Finally, the rules associated with this phase do not change the number of $f$'s,
thus, each cell in the firing squad still ends with one $f$.
\end{proof}

\begin{corollary}[FSSP-I: Number of steps]
\label{cor:FSSP-I}
For each cell $\sigma_i$, the phase FSSP-I takes $level_c(i)+2$ steps.
\end{corollary}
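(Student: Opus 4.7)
The plan is to read the step count directly off the inductive timeline already established in the proof of Algorithm~\ref{alg:FSSP-I}. That argument fixes, for every cell $\sigma_i$, three distinguished moments: the first $d$ arrives from some predecessor at step $level_c(i)$; at step $level_c(i)+1$ the cell fires rule $0.2$ (or, for the commander, rule $0.1$) and moves from $s_0$ to $s_1$ while relaying $d$'s to its neighbors; and at step $level_c(i)+2$ it absorbs the remaining copies of $d$ arriving from peers and successors, applying one of the rules $1.1$, $1.2$, $1.3$ and entering state $s_2$. Since Phase FSSP-I is defined to terminate precisely when $\sigma_i$ reaches $s_2$, this gives the claimed bound of $level_c(i)+2$ steps as an upper bound.

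To upgrade this to an equality I would verify that $\sigma_i$ cannot enter $s_2$ any earlier. First I would observe the base case: the commander $\sigma_c$ has $level_c(c)=0$ and, starting from $s_0 a$, needs one step to reach $s_1$ (via rule $0.1$) and a second step to reach $s_2$ (via rule $1.1$), for a total of $2 = 0+2$. For any other $\sigma_i$, the only rule that leaves $s_0$ consumes a $d$, and the inductive claim in the proof of Algorithm~\ref{alg:FSSP-I} already shows that no $d$ reaches $\sigma_i$ before step $level_c(i)$, because the broadcast strictly follows the virtual dag defined by Algorithm~\ref{alg:bfs}. One more step in $s_1$ is then forced before any rule from group $1$ can fire. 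The mild obstacle, if any, is simply this lower-bound direction; but it is an immediate consequence of the BFS structure of $level_c$ and the one-token-per-transition discipline of the rules already catalogued, so no new work beyond citing the preceding proof is required.
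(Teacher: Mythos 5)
Your proposal is correct and follows essentially the same route as the paper: the step count is read directly off the inductive timeline established in the proof of Algorithm~\ref{alg:FSSP-I} (receive $d$ at step $level_c(i)$, transit to $s_1$ at step $level_c(i)+1$, transit to $s_2$ at step $level_c(i)+2$), which is exactly what the paper's one-line proof cites. Your extra paragraph confirming the lower bound (commander base case, and no $d$ arriving before step $level_c(i)$) is a harmless elaboration of the same induction rather than a different argument.
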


\begin{proof}
As indicated in the proof of the Algorithm~\ref{alg:FSSP-I}, 
the total number of steps is $level_c(i)+2$.
\end{proof}

\medskip

\begin{algorithm}{FSSP-II: Convergecasts from terminal nodes}
\label{alg:FSSP-II}
\Precondition{
As described in the postcondition of Algorithm~\ref{alg:FSSP-I}.
}
\Postcondition{~
\begin{itemize}
	\item This phase ends when the commander enters state $s_6$.
	\item A cell $\sigma_i$ has
	\begin{itemize}
		\item $count_c(i)$ copies of $a$;
		\item $e_c+2$ copies of $e$, if $\sigma_i = \sigma_c$;
		\item one copy of $f$, if $\sigma_i \in F$.
	\end{itemize}
\end{itemize}
}


\end{algorithm}

\begin{proof}
Briefly, this phase of the algorithm is a \emph{convergecast} of $c$'s, starting from terminal cells,
and further relayed up, on the virtual dag, until the commander is reached.

For the purpose of this phase, the non-commander cells can be organized in the following three groups:
TC cells = terminal cells; 
NTC-NTP cells = non-terminal cells without non-terminal peers
(i.e., cells without peers or cells with terminal peers only);
NTC+NTP cells = non-terminal cells with non-terminal peers
(these cells may also have terminal peers).

During this phase, these cells will make transitions between the following three conceptual stages:
WCS = waiting for convergecasts from successors (state $s_4$);
RTC = ready to convergecast (state $s_5$);
HC = have convergecasted (state $s_6$).
Specifically, the following transitions will be made:
the TC cells will transit immediately from the WCS stage to the HC stage;
the NTC-NTP cells will linger in the WCS stage until they receive convergecasts from all their successors,
after which they will transit directly to the HC stage;
the NTC+NTP cells will linger in the WCS stage until they receive convergecasts from all their successors,
subsequently they will linger in the RTC stage until all their non-terminal peers reach the RTC stage as well,
after which they will transit to the HC stage.

During this process, cells will exchange $c$-notifications, which
are messages consisting of number of $c$'s and $h$-notifications,
which are messages consisting of number of $h$'s.
The actual numbers depend on network topology and take into account 
the multiple paths that appear in the virtual dag.

The $c$-notification broadcasted by cell $\sigma_i$ consists of $count_c(i)$ copies of $c$
and is only sent once when $\sigma_i$ transits into the HC stage.

The $h$-notification broadcasted by cell $\sigma_i$ consists of $u$ copies of $h$,
where $u$ is the number defined in the precondition.
This notification is sent repeatedly, while $\sigma_i$ remains in the RTC stage,
until $\sigma_i$ transits into the HC stage.
The $h$-notifications synchronize the non-terminal peers that cannot transit to the HC stage
until all of them have reached the RTC stage.
This avoids the potential confusion that could otherwise arise when a 
non-terminal cell receives an ``ambiguous'' $c$-notification,
i.e., a $c$-notification that could come both from a successor or from a non-terminal peer.

Without loss of generality, we illustrate our solution on the dag from Figure~\ref{fig:WLOG}.
This figure shows a typical sub-dag of the virtual dag created by Algorithm~\ref{alg:bfs},
where cells $\sigma_1$, $\sigma_2$, $\sigma_3$, $\sigma_4$ and $\sigma_5$ are at the same level,
and the horizontal lines indicate peer relations.

\begin{figure}[h]
\centerline{\includegraphics[scale=1.0]{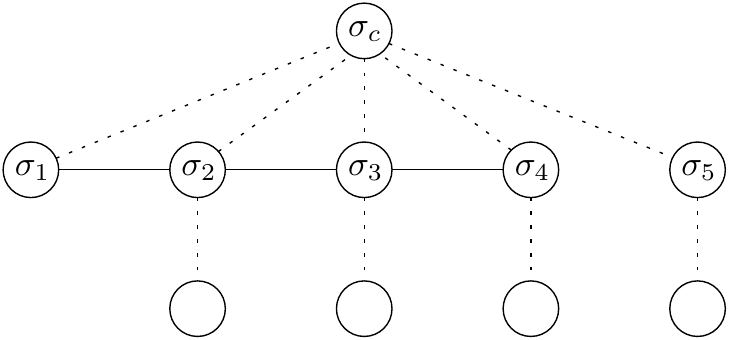}}
\caption{A typical sub-dag of the virtual dag.}
\label{fig:WLOG}
\end{figure}

Cell $\sigma_1$ is a TC cell and
will transit from stage WCS to stage HC, immediately after detecting that it has no successors
(there is no need for any synchronization with its peer $\sigma_2$).

Cell $\sigma_5$ is a NTC-NTP cell and will transit from stage WCS to stage HC, 
after receiving $c$-notifications from all its successors.

Cells $\sigma_2$, $\sigma_3$ and $\sigma_4$ are NTC+NTP cells.
Each of these cells will linger in stage WCS until it receives $c$-notifications from all its successors,
when it will enter stage RTC. Cells $\sigma_2$ and $\sigma_3$ are peers,
therefore none of them will be allowed to transit to stage HC,
until both of them have reached the RTC stage.
Similarly, cells $\sigma_3$ and $\sigma_4$ are peers,
therefore none of them will be allowed to transit to stage HC,
until both of them have reached the RTC stage.
Assume that cells $\sigma_2$, $\sigma_3$ and $\sigma_4$ will reach stage RTC in this order.
Then, cell $\sigma_2$ will wait in stage RTC until $\sigma_3$ also reaches the same stage.
When this eventuates, $\sigma_2$ will transit to stage HC, 
while $\sigma_3$ will still linger in stage RTC until $\sigma_4$ reaches the same stage.
When this eventuates, both $\sigma_3$ and $\sigma_4$ will transit at the same time to stage HC.

A TC cell $\sigma_i$ enters this phase $level_c(i)$ steps after the commander,
idles one step in state $s_2$, then starts its role in the convergecast, 
by broadcasting $count_c(i)$ copies of $c$ to its predecessors and peers
(it does not have successors) and transits to state $s_6$.
This cell further idles in state $s_6$ until it receives $e$'s from its predecessors.
The convergecast takes four steps at each level. 

The total run-time is dominated by $e_c$, the length of the longest level-preserving path from commander.
Therefore, the convergecast wave will complete at commander after $e_c + 4e_c - 2 = 5e_c - 2$ steps
after the commander starts this phase.
When the commander receives the convergecast from all its successors, 
it takes two steps to transit to state $s_6$.
Therefore, the commander enters state $s_6$, $5e_c$ steps after it starts this phase.
\end{proof}


\begin{corollary}[FSSP-II: Number of steps]
\label{cor:FSSP-II}
For each cell $\sigma_i$, the phase FSSP-II takes $5e_c - level_c(i)$ steps. 
\end{corollary}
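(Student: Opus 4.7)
The plan is to treat phase FSSP-II as a global process that terminates at a single well-defined global time step, namely the step at which the commander enters state $s_6$, and then simply subtract from it the global step at which cell $\sigma_i$ entered state $s_2$. This reduces the corollary to a one-line arithmetic computation whose two inputs are already available: Corollary~\ref{cor:FSSP-I} gives the per-cell start time of the phase, and the concluding timing argument inside the proof of Algorithm~\ref{alg:FSSP-II} gives its end time.

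First, I would fix a global time axis counting steps from the initial configuration. By Corollary~\ref{cor:FSSP-I}, cell $\sigma_i$ completes phase FSSP-I and enters state $s_2$ at global step $level_c(i)+2$. In particular, since $level_c(\sigma_c)=0$, the commander begins phase FSSP-II at global step $2$. Next, I would invoke the concluding timing estimate already established in the proof of Algorithm~\ref{alg:FSSP-II}: the convergecast wave takes $5e_c-2$ steps to travel from the deepest level back to the commander, after which two further steps are needed for the commander to actually transit into $s_6$. Consequently, the commander enters $s_6$ exactly $5e_c$ local steps after starting the phase, i.e., at global step $2+5e_c$.

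Combining these two facts, the number of steps that cell $\sigma_i$ spends in phase FSSP-II is
\[
(2+5e_c)-(level_c(i)+2)=5e_c-level_c(i),
\]
which is the claimed bound.

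The only conceptual obstacle is interpreting ``the phase takes $k$ steps for $\sigma_i$'' for a non-commander cell, because terminal and shallow cells enter $s_6$ well before the commander does and then simply idle there until the convergecast wave completes. The natural and consistent reading, matching the phase overview (``this phase \ldots\ ends when the commander enters state $s_6$''), is that FSSP-II is a global event with a single end time for every cell; under this reading the computation above is immediate and the only nontrivial ingredient is the $5e_c$ figure imported from the proof of Algorithm~\ref{alg:FSSP-II}.
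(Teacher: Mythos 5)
Your proposal is correct and takes essentially the same route as the paper: the paper's own proof merely cites the $5e_c$ figure established at the end of the proof of Algorithm~\ref{alg:FSSP-II}, and your explicit bookkeeping --- subtracting each cell's phase-start time $level_c(i)+2$ (from Corollary~\ref{cor:FSSP-I}) from the common global end time $5e_c+2$ at which the commander enters $s_6$ --- is exactly the arithmetic the paper leaves implicit. Nothing further is needed.
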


\begin{proof}
As indicated in the proof of Algorithm~\ref{alg:FSSP-II},
this phase takes $5e_c$ steps.
\end{proof}


\newpage

\begin{algorithm}{FSSP-III: Second broadcast from the commander}
\label{alg:FSSP-III}
\Precondition{
As described in the postcondition of Algorithm~\ref{alg:FSSP-II}.
}
\Postcondition{~
\begin{itemize}
	\item The end state is $s_8$.
	\item A cell $\sigma_i$ has
	\begin{itemize}
		\item $count_c(i)$ copies of $a$;
		\item $(e_c + 1 - level_c(i))count_c(i)$ copies of $b$;
		\item one copy of $f$, if $\sigma_i \in F$.
	\end{itemize}
\end{itemize}
}

\end{algorithm}

\begin{proof}
In this phase, commander starts its second \emph{broadcast}, 
by sending $e_c+1$ copies of $e$'s to all its successors.
By induction on level, a cell $\sigma_i$ receives a total of 
$(e_c + 2 - level_c(i))count_c(i)$ copies of $e$'s from its predecessors,
reduces this count by $count_c(i)$ (i.e., the count of $a$'s),
forwards the remaining $(e_c + 1 - level_c(i))count_c(i)$ copies of $e$'s to all its successors
and creates for itself $(e_c + 1 - level_c(i))count_c(i)$ copies of $b$'s.
A more detailed description will be given in the final version.

All rules of this phase do not change the number of $a$'s or the number of $f$'s;
therefore, the corresponding postcondition holds. 

\end{proof}

\begin{corollary}[FSSP-III: Number of steps]
\label{cor:FSSP-III}
For each cell $\sigma_i$, the phase FSSP-III takes $level_c(i)+3$ steps.
\end{corollary}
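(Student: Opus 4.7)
The plan is to read off the step count directly from the broadcast analysis already carried out in the proof of Algorithm~\ref{alg:FSSP-III}. That analysis shows that the second broadcast propagates along the virtual dag \emph{one level per step}: the commander emits its $e$-tokens during the first step of the phase, and, by induction on $L = level_c(i)$, a cell $\sigma_i$ at level $L \ge 1$ has all $(e_c + 2 - L)\,count_c(i)$ copies of $e$ in its multiset by the end of step $L$. Taking step $0$ to be the configuration at the start of the phase handles the commander uniformly, since it already carries its $e_c + 2$ copies of $e$ at step $0$.

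Once both the $a$-tokens and the $e$-tokens are present in a cell in state $s_6$, the subsequent dynamics are rigidly prescribed by the rules and occupy exactly three further steps. At step $L+1$, under the $max$ rewrite mode, $s_6 a e \to s_7 a k$ fires $count_c(i)$ times in parallel with $s_6 e \to s_7 b e_{go}$ absorbing the remaining $e$'s; the cell enters $s_7$, accumulates $count_c(i)$ copies of $k$ and $(e_c + 1 - L)\,count_c(i)$ copies of $b$, and forwards the reduced quota of $e$'s onward. At step $L+2$, the higher-priority rule $s_7 k \to s_7$ consumes all $k$'s at once while preserving the state. At step $L+3$, the rule $s_7 a \to s_8 a$ advances the cell into state $s_8$. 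Summing, $\sigma_i$ reaches $s_8$ after exactly $L + 3$ steps of the phase.

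The main subtlety, and where a careful argument is required, is ruling out a premature firing of $s_7 e \to s_8$ at step $L+2$ that would transport the cell into $s_8$ one step early. Stray $e$'s reach $\sigma_i$ from its peers (which broadcast at the same step as $\sigma_i$ processes its predecessors' $e$'s), arriving in $\sigma_i$'s multiset by the end of step $L+1$. Under the weak priority interpretation, however, while $k$'s remain the higher-priority $s_7 k \to s_7$ must fire, and its state-preserving action is incompatible with the state-changing $s_7 e \to s_8$. The latter is therefore suppressed until step $L+3$, at which point it fires coherently alongside $s_7 a \to s_8 a$, both targeting $s_8$. A symmetric check rules out disruption from stray $e$'s arriving at predecessors (which are in state $s_7$ precisely at the matching step and so absorb the $e$'s consistently) or at cells already in $s_8$ (where no rule mentions $e$, leaving them inert). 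With these interference-free propagation arguments in place, the inductive count yields the $L + 3$ bound.
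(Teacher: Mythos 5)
Your proposal is correct and follows essentially the same route as the paper: the level-by-level induction on the second broadcast (a cell at level $L$ holds its $(e_c+2-L)\,count_c(i)$ copies of $e$ by the end of step $L$ of the phase), followed by counting the fixed three-step tail $s_6 \rightarrow s_7 \rightarrow s_7 \rightarrow s_8$. The paper's own proof of the corollary is merely a one-line deferral to the proof of Algorithm~\ref{alg:FSSP-III} (which itself postpones details), so your explicit accounting of the $k$-cleanup step and the weak-priority/state-consistency argument blocking a premature application of $s_7 e \rightarrow s_8$ on stray peer-sent $e$'s supply exactly the missing justification, and both check out against the trace in Table~\ref{tab:Dag-NoSiblings}.
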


\begin{proof}
As indicated in the proof of Algorithm~\ref{alg:FSSP-III},
this phase takes $level_c(i)+3$ steps.
\end{proof}


\begin{algorithm}{FSSP-IV: Timing for entering the firing state}
\label{alg:FSSP-IV}
\Precondition{
As described in the postcondition of Algorithm~\ref{alg:FSSP-III}.
}
\Postcondition{~
\begin{itemize}
\item The end state is $s_9$ for cells in the firing squad,
			or $s_0$, otherwise.
\item Each cell is empty.
\end{itemize}
}

\end{algorithm}

\begin{proof}
As long as $b$'s are present, a cell $\sigma_i$ performs a transition step that 
decreases the number of $b$'s by $count_c(i)$ (i.e., the number of $a$'s). 
This step will be repeated $(e_c + 1 - level_c(i))$ times,
as given by the initial ratio between the number of $b$'s, $(e_c + 1 - level_c(i))count_c(i)$, 
and the number of $a$'s, $count_c(i)$. This is the delay every cell needs to wait, 
before entering either the firing state $s_9$ or the initial state $s_0$.

Finally, in the last step, cell $\sigma_i$ enters $s_9$, if $\sigma_i$ has one $f$, 
or $s_0$, otherwise. At the same time, all existing objects are removed.
\end{proof}

\begin{corollary}[FSSP-IV: Number of steps]
\label{cor:FSSP-IV}
For each cell $\sigma_i$, the phase FSSP-IV takes $e_c + 2 - level_c(i)$ steps.
\end{corollary}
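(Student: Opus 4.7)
The plan is to combine the postcondition of Algorithm~\ref{alg:FSSP-III} with a direct counting argument tracking the multiset dynamics already sketched in the proof of Algorithm~\ref{alg:FSSP-IV}. First I would record the initial configuration of this phase: by the FSSP-III postcondition, each cell $\sigma_i$ enters phase FSSP-IV in state $s_8$ holding exactly $count_c(i)$ copies of $a$, exactly $(e_c+1-level_c(i))\,count_c(i)$ copies of $b$, and a single $f$ if and only if $\sigma_i\in F$, with no other objects present.

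The second step is to analyse the $b$-countdown loop. Under the weak priority interpretation, while at least one $b$ is in the cell the rule $s_8 a b \rightarrow s_8 a$ takes precedence over all three terminating rules for state $s_8$. In the $\min$ rewrite mode this rule is applied concurrently once for each available $a$, so a single transition consumes exactly $count_c(i)$ copies of $b$ while preserving the state $s_8$ together with the $a$'s and the $f$ (neither of which is touched by the loop rule). Dividing the initial $b$-count by $count_c(i)$ gives the number of loop iterations: exactly $e_c+1-level_c(i)$ steps, after which the cell is $b$-free.

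Third, I would handle the single terminating transition. Once no $b$ remains, the lower-priority rules become enabled; for a squad cell the rule $s_8 a f \rightarrow s_9$ fires (preferred by weak priority over the $f$-absent alternatives), while for a non-squad cell the rule $s_8 a \rightarrow s_0$ fires. In either case exactly one further transition suffices and moves the cell to its target state while removing the remaining objects, as demanded by the FSSP-IV postcondition. Summing the two parts yields a total of $(e_c+1-level_c(i))+1 = e_c+2-level_c(i)$ steps.

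The main obstacle is purely a bookkeeping one: confirming that the weak priority discipline, together with the $\min/repl$ regime, enforces precisely the ``wait on $b$, then fire once'' schedule claimed above, and that $f$ is not inadvertently consumed during the countdown. Both points follow directly by inspection of the four $s_8$-rules, so no further argument is needed beyond what is recorded in the proof of Algorithm~\ref{alg:FSSP-IV}.
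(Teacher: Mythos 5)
Your proposal is correct and follows essentially the same route as the paper: the paper's proof of Corollary~\ref{cor:FSSP-IV} simply invokes the argument in the proof of Algorithm~\ref{alg:FSSP-IV}, namely that the $b$-countdown removes $count_c(i)$ copies of $b$ per step for $(e_c+1-level_c(i))$ steps, followed by one terminating transition into $s_9$ or $s_0$, giving $e_c+2-level_c(i)$ in total. The only minor quibble is your attribution of the ``one application per $a$'' behaviour to the $min$ rewrite mode, whereas this is really the $max$-parallel behaviour that the surrounding prose (and the trace in Table~\ref{tab:Dag-NoSiblings}) actually uses; this does not affect the step count you derive.
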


\begin{proof}
As indicated in the proof of Algorithm~\ref{alg:FSSP-IV},
this phase takes $(e_c + 1 - level_c(i))+1 = e_c + 2 - level_c(i)$.
\end{proof}

\begin{theorem}
\label{the:timeComplexity}
For each cell $\sigma_i$, the combined running time of the four phases 
Algorithm~\ref{alg:FSSP-I}, \ref{alg:FSSP-II}, \ref{alg:FSSP-III} and \ref{alg:FSSP-IV} is 
$6e_c + 7$, where $e_c$ is the eccentricity of the commander $\sigma_c$.
\end{theorem}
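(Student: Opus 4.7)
The plan is to simply add up the step counts from the four corollaries already established for the four phases, and observe that the $level_c(i)$ terms telescope away, leaving a quantity that depends only on $e_c$. This cancellation is what actually guarantees synchronization: every cell, regardless of its level in the virtual dag, arrives at its final state after the same total number of steps.

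Concretely, I would invoke Corollary~\ref{cor:FSSP-I}, Corollary~\ref{cor:FSSP-II}, Corollary~\ref{cor:FSSP-III}, and Corollary~\ref{cor:FSSP-IV} to obtain the per-phase counts $level_c(i)+2$, $5e_c-level_c(i)$, $level_c(i)+3$, and $e_c+2-level_c(i)$ respectively, and then sum. Writing it as
\begin{equation*}
\bigl(level_c(i)+2\bigr) + \bigl(5e_c-level_c(i)\bigr) + \bigl(level_c(i)+3\bigr) + \bigl(e_c+2-level_c(i)\bigr) = 6e_c+7,
\end{equation*}
makes the cancellation transparent. Since the total is independent of $i$, all squad cells enter $s_9$ at the same global step, and all other cells simultaneously enter $s_0$, as required by Definition~\ref{def:FSSP-P-B}.

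The only subtlety worth addressing, and what I would flag as the main obstacle, is the claim that the four phases can legitimately be composed by mere addition. This would be routinely justified by noting that each phase's postcondition is the next phase's precondition (as already asserted in the statements of the algorithms), and that a cell does not start counting steps for phase $k+1$ until it has finished phase $k$, so that phases are executed strictly sequentially at each cell. Because the commander sets the global clock in FSSP-II and FSSP-III (where the $5e_c$ and $e_c+1$ terms originate), it is important to verify that the per-cell counts already incorporate any idling needed to wait for messages from predecessors; but this has been handled inside the individual corollary proofs, so no further work is needed here. The conclusion $6e_c+7$ then follows immediately. \qed
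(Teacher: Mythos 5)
Your proof is correct and follows essentially the same route as the paper's: both invoke Corollaries~\ref{cor:FSSP-I}--\ref{cor:FSSP-IV} and sum the per-phase counts so that the $level_c(i)$ terms cancel, yielding $6e_c+7$. The extra remarks you add about sequential composition of the phases and the independence from $i$ are sensible elaborations but do not change the argument.
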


\begin{proof}
The result is obtained by summing the individual running times of the four phases, 
as given by Corollaries~\ref{cor:FSSP-I},~\ref{cor:FSSP-II}, \ref{cor:FSSP-III} and \ref{cor:FSSP-IV}:
$(level_c(i) + 2) + (5e_c - level_c(i)) + (level_c(i) + 3) + (e_c + 2 - level_c(i)) = 6e_c + 7.$
\end{proof}

\medskip

\begin{example}
\label{ex:FSSP-trace}
We present traces of the FSSP algorithm for the hP~system
given in Figure~\ref{fig:Dag-NoSiblings} in Table~\ref{tab:Dag-NoSiblings},
where the cells are ordered according to their \emph{levels}
and the starting states of phases FSSP-II, FSSP-III and FSSP-IV are highlighted.
\end{example}


\begin{table}[h]
\caption{The FSSP trace on the dag of Figure~\ref{fig:Dag-NoSiblings}, where
$c=6$, $e_6=3$, $F=\{ \sigma_1, \sigma_4, \sigma_5, \sigma_7, \sigma_9, \sigma_{10} \}$.}
\label{tab:Dag-NoSiblings}
\begin{center}
\renewcommand{\tabcolsep}{4.5pt}
\renewcommand{\arraystretch}{1.3}
\footnotesize
\noindent
\begin{tabular}{ | l | l | l | l | l | l | l | l | l | l | l | l |}
\hline
 & $\sigma_6$ & $\sigma_2$ & $\sigma_3$ & $\sigma_9$ & $\sigma_1$ & $\sigma_5$ & $\sigma_7$ & $\sigma_8$ & $\sigma_4$ & $\sigma_{10}$\\ \hline
0 & $s_0 af$ & $s_0 $ & $s_0 $ & $s_0 f$ & $s_0 $ & $s_0 f$ & $s_0 f$ & $s_0 $ & $s_0 f$ & $s_0 f$ \\ \hline
1 & $s_1 aef$ & $s_0 d$ & $s_0 d$ & $s_0 df$ & $s_0 $ & $s_0 f$ & $s_0 f$ & $s_0 $ & $s_0 f$ & $s_0 f$ \\ \hline
2 & \myway{$s_2 ad^{3}e^{2}fk$} & $s_1 a$ & $s_1 a$ & $s_1 af$ & $s_0 d^{2}$ & $s_0 df$ & $s_0 df$ & $s_0 d$ & $s_0 f$ & $s_0 f$ \\ \hline
3 & $s_2 ad^{3}e^{3}f$ & \myway{$s_2 ad^{3}k$} & \myway{$s_2 ad^{3}k$} & \myway{$s_2 adfk$} & $s_1 a^{2}$ & $s_1 af$ & $s_1 adf$ & $s_1 ad$ & $s_0 df$ & $s_0 df$ \\ \hline
4 & $s_3 ad^{3}e^{4}f$ & $s_2 ad^{3}$ & $s_2 ad^{3}$ & $s_2 adf$ & \myway{$s_2 a^{2}k^{2}$} & \myway{$s_2 afk$} & \myway{$s_2 adfkl$} & \myway{$s_2 adkl$} & $s_1 af$ & $s_1 af$ \\ \hline
5 & $s_4 ad^{3}e^{5}f$ & $s_3 ad^{3}$ & $s_3 ad^{3}$ & $s_3 adf$ & $s_2 a^{2}$ & $s_2 af$ & $s_2 adfl$ & $s_2 adl$ & \myway{$s_2 afk$} & \myway{$s_2 afk$} \\ \hline
6 & $s_4 ad^{3}e^{6}f$ & $s_4 ac^{3}d^{3}$ & $s_4 ac^{2}d^{3}g$ & $s_4 adfg$ & $s_6 a^{2}$ & $s_6 af$ & $s_3 adfgl$ & $s_3 adgl$ & $s_2 afg$ & $s_2 afg$ \\ \hline
7 & $s_4 ad^{3}e^{7}f$ & $s_4 a$ & $s_4 adg$ & $s_4 adfg$ & $s_6 a^{2}$ & $s_6 af$ & $s_4 acdflp$ & $s_4 acdlp$ & $s_6 afg$ & $s_6 afg$ \\ \hline
8 & $s_4 ad^{3}e^{8}f$ & $s_5 ak$ & $s_4 adg$ & $s_4 adfg$ & $s_6 a^{2}$ & $s_6 af$ & $s_4 aflp$ & $s_4 alp$ & $s_6 af$ & $s_6 af$ \\ \hline
9 & $s_4 ad^{3}e^{9}f$ & $s_5 a$ & $s_4 adgh$ & $s_4 adfgh$ & $s_6 a^{2}$ & $s_6 af$ & $s_5 afhklp$ & $s_5 ahklp$ & $s_6 afh$ & $s_6 afh$ \\ \hline
10 & $s_4 acd^{3}e^{10}f$ & $s_6 a$ & $s_4 adgh^{2}$ & $s_4 adfgh^{2}$ & $s_6 a^{2}c$ & $s_6 acf$ & $s_5 afhlp$ & $s_5 ahlp$ & $s_6 afh$ & $s_6 afh$ \\ \hline
11 & $s_4 ad^{2}e^{11}f$ & $s_6 a$ & $s_4 acdgh^{2}q$ & $s_4 acdfgh^{2}q$ & $s_6 a^{2}$ & $s_6 af$ & $s_6 acfhpq$ & $s_6 achpq$ & $s_6 acfq$ & $s_6 acfq$ \\ \hline
12 & $s_4 ad^{2}e^{12}f$ & $s_6 a$ & $s_4 agh^{2}q$ & $s_4 afgh^{2}q$ & $s_6 a^{2}$ & $s_6 af$ & $s_6 af$ & $s_6 a$ & $s_6 af$ & $s_6 af$ \\ \hline
13 & $s_4 ad^{2}e^{13}f$ & $s_6 a$ & $s_5 agk$ & $s_5 afgk$ & $s_6 a^{2}$ & $s_6 af$ & $s_6 af$ & $s_6 a$ & $s_6 af$ & $s_6 af$ \\ \hline
14 & $s_4 ad^{2}e^{14}f$ & $s_6 a$ & $s_5 ag$ & $s_5 afg$ & $s_6 a^{2}$ & $s_6 af$ & $s_6 af$ & $s_6 a$ & $s_6 af$ & $s_6 af$ \\ \hline
15 & $s_4 ac^{2}d^{2}e^{15}f$ & $s_6 a$ & $s_6 ag$ & $s_6 afg$ & $s_6 a^{2}c$ & $s_6 af$ & $s_6 acf$ & $s_6 ac$ & $s_6 af$ & $s_6 af$ \\ \hline
16 & $s_4 ae^{15}f$ & $s_6 a$ & $s_6 a$ & $s_6 af$ & $s_6 a^{2}$ & $s_6 af$ & $s_6 af$ & $s_6 a$ & $s_6 af$ & $s_6 af$ \\ \hline
17 & \myway{$s_6 ae^{5}f$} & \myway{$s_6 a$} & \myway{$s_6 a$} & \myway{$s_6 af$} & \myway{$s_6 a^{2}$} & \myway{$s_6 af$} & \myway{$s_6 af$} & \myway{$s_6 a$} & \myway{$s_6 af$} & \myway{$s_6 af$} \\ \hline
18 & $s_7 ab^{4}fk$ & $s_6 ae^{4}$ & $s_6 ae^{4}$ & $s_6 ae^{4}f$ & $s_6 a^{2}$ & $s_6 af$ & $s_6 af$ & $s_6 a$ & $s_6 af$ & $s_6 af$ \\ \hline
19 & $s_7 ab^{4}e^{9}f$ & $s_7 ab^{3}k$ & $s_7 ab^{3}k$ & $s_7 ab^{3}fk$ & $s_6 a^{2}e^{6}$ & $s_6 ae^{3}f$ & $s_6 ae^{3}f$ & $s_6 ae^{3}$ & $s_6 af$ & $s_6 af$ \\ \hline
20 & \myway{$s_8 ab^{4}f$} & $s_7 ab^{3}e^{6}$ & $s_7 ab^{3}e^{6}$ & $s_7 ab^{3}e^{2}f$ & $s_7 a^{2}b^{4}k^{2}$ & $s_7 ab^{2}fk$ & $s_7 ab^{2}e^{2}fk$ & $s_7 ab^{2}e^{2}k$ & $s_6 ae^{2}f$ & $s_6 ae^{2}f$ \\ \hline
21 & $s_8 ab^{3}f$ & \myway{$s_8 ab^{3}$} & \myway{$s_8 ab^{3}$} & \myway{$s_8 ab^{3}f$} & $s_7 a^{2}b^{4}$ & $s_7 ab^{2}f$ & $s_7 ab^{2}e^{3}f$ & $s_7 ab^{2}e^{3}$ & $s_7 abfk$ & $s_7 abfk$ \\ \hline
22 & $s_8 ab^{2}f$ & $s_8 ab^{2}$ & $s_8 ab^{2}$ & $s_8 ab^{2}f$ & \myway{$s_8 a^{2}b^{4}$} & \myway{$s_8 ab^{2}f$} & \myway{$s_8 ab^{2}f$} & \myway{$s_8 ab^{2}$} & $s_7 abf$ & $s_7 abf$ \\ \hline
23 & $s_8 abf$ & $s_8 ab$ & $s_8 ab$ & $s_8 abf$ & $s_8 a^{2}b^{2}$ & $s_8 abf$ & $s_8 abf$ & $s_8 ab$ & \myway{$s_8 abf$} & \myway{$s_8 abf$} \\ \hline
24 & $s_8 af$ & $s_8 a$ & $s_8 a$ & $s_8 af$ & $s_8 a^{2}$ & $s_8 af$ & $s_8 af$ & $s_8 a$ & $s_8 af$ & $s_8 af$ \\ \hline
25 & $s_9 $ & $s_0 $ & $s_0 $ & $s_9 $ & $s_0 $ & $s_9 $ & $s_9 $ & $s_0 $ & $s_9 $ & $s_9 $ \\ \hline
\end{tabular}
\end{center}
\end{table}


\clearpage

\section{Conclusion}
\label{sec:conclusion}
We have presented two new algorithms for the Firing Squad Synchronization Problem 
that operate on several families of P systems.
Out of the box, both algorithms work for hyperdag P~systems and symmetric neural P~systems.
The first algorithm is based on dynamic structures and highlights 
the merits of dags as underlying structures for P~systems.
To support the required dynamic structures, we propose 
an extended interpretation of P~systems which allows mobile channels, 
a solution which we believe is fully compatible with the existing P~systems rules.

The second algorithm, which is more complex, 
is applicable to P~systems with static membrane topologies and
is uniformly defined in terms of a structural $Neighbor$ relation.
These two algorithms do not require naming facilities, such as cell IDs or cell polarization
and handle a generalized version of the FSSP, 
where the commander can assume an arbitrary position and 
only a specified subset of the cells needs to be synchronized.

The work started in this paper leaves open several interesting problems.
Can we find simpler and more efficient solutions for hP~systems based on single-sourced dags?
Can we find simpler and more efficient solutions for hP or snP~systems using named cells (unique cell IDs)? 
Can we find a solution for arbitrary strongly-connected (non necessarily symmetric) nP~systems?  
What is relation between the mobile channels, which we have here proposed for P~systems,
and the support for mobile channels in the $\pi$-calculus? 


\section*{Acknowledgements}

The authors wish to thank John Morris and the three anonymous 
reviewers for detailed comments and feedback that helped us 
improve the paper.


\bibliographystyle{eptcs}

\end{document}